\newtheorem{theorem}{Theorem}[section]
\newtheorem{proposition}[theorem]{Proposition}
\newenvironment{myeps}[3]%
        {\begin{figure}[#1]%
        \begin{center}%
        \resizebox{#2}{!}{%
                \includegraphics{#3}}%
        \end{center}}%
        {\end{figure}}
\newcommand{\cpc}{competitive parallel computing\xspace}
\newcommand{\Cpc}{Competitive parallel computing\xspace}
\begin{document}

\title{A sufficient condition for a linear speedup \\in competitive parallel computing}

\author{Naoki Yonezawa\thanks{Information System Course, Department of Business Administration, Faculty of Modern Life, Teikyo Heisei University, Nakano, Tokyo 164--8530, Japan. Email: \texttt{n.yonezawa@thu.ac.jp}}}

\date{}

\maketitle

\begin{abstract}
In competitive parallel computing, the identical copies of a code in a phase of a sequential program are assigned to processor cores and the result of the fastest core is adopted. In the literature, it is reported that a superlinear speedup can be achieved if there is an enough fluctuation among the execution times consumed by the cores. \Cpc is a promising approach to use a huge amount of cores effectively. However, there is few theoretical studies on speedups which can be achieved by \cpc at present. In this paper, we present a behavioral model of \cpc and provide a means to predict a speedup which \cpc yields through  theoretical analyses and simulations. We also found a sufficient condition to provide a linear speedup which \cpc yields. More specifically, it is sufficient for the execution times which consumed by the cores to follow an exponential distribution. In addition, we found that the different distributions which have the identical coefficient of variation (CV) do not always provide the identical speedup. While CV is a convenient measure to predict a speedup, it is not enough to provide an exact prediction.
\end{abstract}

%\begin{keyword}
%competitive parallel computing, coefficient of variation, linear speedup, probabilistic analysis
%\end{keyword}

\section{Introduction}
Multi-core and many-core are in the mainstream of parallel computing and there is a steady increase in the number of their cores. However, in the near future, it is expected that the degree of parallelism is below the number of cores which the hardware provides due to the restriction of the problems to solve or the algorithms to execute~\cite{hill2008amdahl}. Meanwhile, it is getting more and more difficult to write a parallel program because 1) it is necessary to control a huge amount of flows of program execution and 2) the elements of parallel computing system get diversified over the last decade. In the future, writing a parallel program gets complicated extremely as the number of cores grows~\cite{pankratius2008software,patterson2013computer}.

To alleviate these above problems, \textit{competitive parallel computing} or its equivalent are proposed~\cite{Cledat:2011:ESU:2048066.2048109,Ertel1992,Trachsel:2010:VCP:1787275.1787325}. In competitive parallel computing, the identical copies of a code in a phase of a sequential program are assigned to processor cores and the result of the fastest core is adopted. It is reported that a superlinear speedup can be achieved 
if there is an enough fluctuation among the execution times consumed by the cores. \Cpc has some advantages over conventional \textit{cooperative parallel computing}; 1) in \cpc, it is not necessary to parallelize an existing program, and 2) \cpc is applicable to algorithms which are impossible or difficult to parallelize. \Cpc is a promising approach to use a huge amount of cores effectively.

However, there is few theoretical studies on speedups which can be achieved by \cpc at present. Specifically, although it is intuitively understood that a larger fluctuation among the execution times, namely, a larger coefficient of variation (CV) provides a larger speedup, to the best of our knowledge, the relation between a CV and a speedup is not evaluated quantitatively in detail.

The contributions of this paper are following:

\begin{itemize}
\item We present the behavioral model of \cpc and provide a means to predict a speedup which \cpc yields through  theoretical analyses and simulations.
\item We found a sufficient condition to provide a linear speedup which \cpc yields. More specifically, it is sufficient for the execution times which consumed by the cores to follow exponential distribution. In addition, we proved that exponential distribution is not the only distribution to achieve a linear speedup. In other words, exponential distribution is not a necessary condition to achieve a linear speedup.
\item We found that the different distributions which have the identical CV do not always provide the identical speedup. While CV is a convenient measure to predict a speedup, it is not enough to provide an exact prediction.
\end{itemize}

The rest of this paper is organized as follows: Section \ref{s:related_work} provides related work. In Section \ref{s:mathematical_analysis}, we propose a mathematical model to evaluate \cpc and present a means to calculate the execution time of \cpc. In Section \ref{s:evaluation}, we evaluate speedups which \cpc provides through Monte Carlo simulation based on our proposed model and present the relation between a CV and a speedup quantitatively. Finally, we conclude our study and describe our future work in Section \ref{s:conclusion}.

\section{Related Work}
\label{s:related_work}

The ideas of assigning the undivided computation and making processors compete among them are proposed in the literature.

Wolfgang~\cite{Ertel1992} proposes random competition, in which the computations compete using the randomness in search algorithm. Although he analyzes speedups based on the variance of the measured execution times, there is no mention of CV.

Trachsel \textit{et al.}~\cite{Trachsel:2010:VCP:1787275.1787325,trachsel2008platform} propose Computation-driven CPE (Competitive Parallel Execution) and Compiler-driven CPE. To achieve a speedup, Computation-driven CPE assigns the different algorithm or the identical algorithm with the different initial conditions to processors while Compiler-driven CPE yields the combinations of the optimization options for compiler to generate variant codes. In experimental evaluation, they obtained superlinear speedups with specific data sets. However, they do not link the superlinear speedups with CV.

Without enough attention to the degree of the variance of the execution time among processors,  using \cpc naively wastes computing resources. To overcome this problem, Cledat \textit{et al.}~\cite{Cledat:2011:ESU:2048066.2048109,Cledat:2009:OCN:1855591.1855596,cledat2010energy} proposes the methods called \textit{learning} and \textit{culling}. The CV of WalkSAT, one of the application they adopted for evaluation, is less than one and the speedup  is worse than a linear speedup. Meanwhile, the CV of another application, namely, MSL motion planning is greater than one and a superlinear speedup is achieved. These results are consistent with our result. Therefore, it is proper to claim that our results reinforce and extend their work.

\section{A Mathematical Analysis of Competitive Parallel Computing}
\label{s:mathematical_analysis}

In this section, we show a behavioral model of \cpc and describe how to calculate the execution time based on the model.

\subsection{A model of the program execution in competitive parallel computing}

In general, a sequential program consists of several \textit{phases}. In this paper, we define the term `a phase' as a program region which is between two semantic points. For example, a phase is a sentence of a program, a function call, or an iteration of a loop. Typically, the result which is produced in a phase is consumed in the consecutive or later phases.

While sequential computing executes a phase on a single core, \cpc assigns the identical copies of the phase to multiple cores and makes the cores compete. The result of the fastest core is adopted and the other cores are terminated. Then, the flow of control goes to the next phase. In this paper, to model \cpc which behaves as mentioned above, we consider the minimum model which represents a program consisting of a single phase and running on $n$ cores as shown in Figure \ref{fig:cpc}.

\begin{myeps}{bt}{75mm}{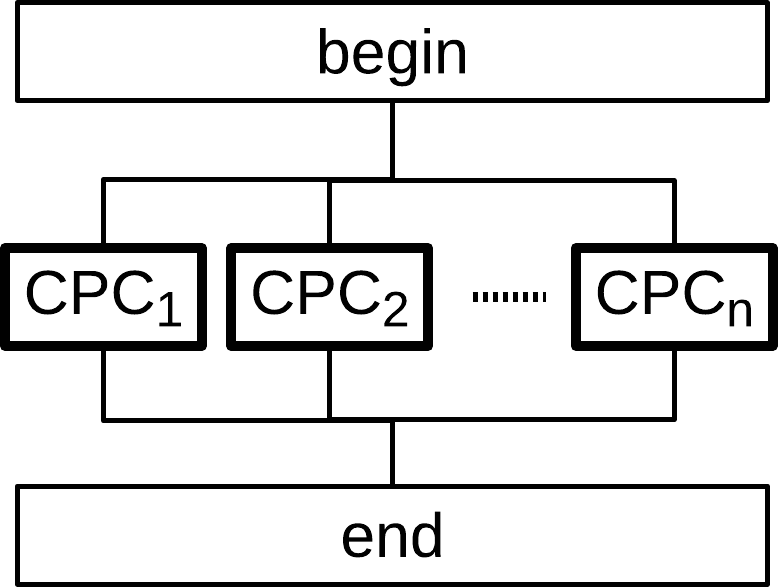}
\caption{A minimum model of competitive parallel computing (CPC)}
\label{fig:cpc}
\end{myeps}

The execution times of a phase running on the different cores might be different each other if the cores are assigned to the different algorithms or the identical algorithm with the different parameters. The external factors including cache misses and network delay also produce the fluctuation of the execution times. These cause \textit{randomness}. In order to model the execution of such a program, we denote the execution time of Core $i$ by $X_i$, $i = 1, 2, \dots, n$, where $\{X_i\}_{i=1}^n$ are independent and identically distributed random variables (i.i.d. r.v.'s). At this time, the overall execution time of the program is $Y_n=\min(X_1, X_2, \dots, X_n)$ because the execution time of the program is the execution time of the fastest core. In the next section, we obtain the probability distribution which the random variable $Y_n$ follows.

\subsection{Calculations of the execution time}

As mentioned above, we assume that $n$ random variables $X_1, X_2, \dots, X_n$ is i.i.d. r.v.'s. We denote the cumulative distribution function (CDF) which these $n$ random variables follow by

\[
F_X(x)=P(X_1\le x)=P(X_2\le x)=\cdots=P(X_n\le x)
\]
and also denote its probability density function (PDF) (if exists) by $f_X(x)$.

\begin{proposition}
The random variable $Y_n$ is the minimum among $n$ random variables $X_1, X_2, \dots, X_n$, that is, $Y_n=\min(X_1, X_2, \dots, X_n)$. The CDF which $Y_n$ follows is as follows:
\[
F_{Y_n}(y)=1-(1-F_X(y))^n.
\]
\end{proposition}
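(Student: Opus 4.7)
The plan is to work with the complementary event $\{Y_n > y\}$ rather than $\{Y_n \le y\}$ directly. The minimum exceeding $y$ is equivalent to \emph{every} $X_i$ exceeding $y$, a conjunction that factors cleanly under independence, whereas the event $\{Y_n \le y\}$ is a disjunction (at least one $X_i \le y$) whose probability would otherwise demand inclusion--exclusion.

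First I would record the set-theoretic identity $\{Y_n > y\} = \bigcap_{i=1}^n \{X_i > y\}$, which is immediate from the definition $Y_n = \min(X_1, \dots, X_n)$: the minimum of finitely many numbers exceeds $y$ if and only if each of them does. Next, invoking the mutual independence of $X_1, \dots, X_n$, I would factor the probability of this intersection as a product, and then use the common CDF $F_X$ to collapse the product:
\[
P(Y_n > y) \;=\; \prod_{i=1}^n P(X_i > y) \;=\; \prod_{i=1}^n \bigl(1 - F_X(y)\bigr) \;=\; \bigl(1 - F_X(y)\bigr)^n.
\]
Taking complements then yields $F_{Y_n}(y) = P(Y_n \le y) = 1 - \bigl(1 - F_X(y)\bigr)^n$, as required.

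There is essentially no genuine obstacle here; the argument is a standard exercise in order statistics. The one subtlety worth flagging is the handling of strict versus non-strict inequalities: $\{Y_n > y\}$ and $\{Y_n \le y\}$ are exact complements in $\mathbb{R}$, so the complementation step is clean and requires no continuity or absolute-continuity assumption on $F_X$. Consequently the identity holds for arbitrary i.i.d.\ $X_i$ — discrete, continuous, or mixed — and no appeal to the density $f_X$ (whose existence was only tentatively assumed in the setup) is needed.
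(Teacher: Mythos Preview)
Your proof is correct and follows essentially the same route as the paper: pass to the complementary event $\{Y_n>y\}$, rewrite it as $\bigcap_i\{X_i>y\}$, factor by independence, and use the common CDF. Your remarks on strict versus non-strict inequalities and on not needing a density are valid refinements but do not alter the argument.
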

\begin{proof}
\begin{eqnarray}
F_{Y_n}(y)&=&P(Y_n\le y)\nonumber\\
&=&P(\min(X_1, X_2, \dots, X_n)\le y)\nonumber\\
&=&1-P(\min(X_1, X_2, \dots, X_n)> y)\nonumber\\
&=&1-P(X_1>y, X_2>y, \dots, X_n> y)\nonumber\\
&=&1-P(X_1>y)P(X_2>y)\cdots P(X_n>y)\nonumber\\
&&(\because \text{independent random variables})\nonumber\\
&=&1-(1-P(X_1\le y))(1-P(X_2\le y))\cdots(1-P(X_n\le y))\nonumber\\
&=&1-(1-F_X(y))(1-F_X(y))\cdots(1-F_X(y))\nonumber\\
&=&1-(1-F_X(y))^n
\label{eq:min_cdf}
\end{eqnarray}
\end{proof}

\section{Evaluation}
\label{s:evaluation}

We calculated speedups which \cpc provides through Monte Carlo simulation based on our proposed model. We gave four probability distributions to the model.

In this section, we describe the four probability distribution~\cite{fushimi2004e,takahashi2008e}. Then, we demonstrate the exact solution in case that random variables which represent the execution time follows exponential distribution. Finally, we show the results of the simulation and discuss the relation between a CV and a speedup.  

\subsection{Probability distribution of the execution time}
\label{s-sec:distribution}
The overall execution time may vary significantly depending on the distribution of random variables which represent the execution time for each processor core. In this study, we assume that a random variable follows one of four distributions: exponential distribution, Erlang distribution, hyperexponential distribution, and uniform distribution. 

In exponential distribution whose parameter is $\lambda$, the CDF for $X_i$ is assumed in the form
\begin{equation}\label{eq:exponential-cdf}
F_X(x) = P(X_i \le x) = 1 - e^{-\lambda x}
\end{equation}
for $i = 1, 2, \dots, n$, so that the PDF is in the form 
\begin{equation}\label{eq:exponential-pdf}
f_X(x) = \lambda e^{-\lambda x}
\end{equation}
and expected value (mean execution time) becomes $E(X_i)=\frac{1}{\lambda}$.

The CDF and the PDF of Erlang distribution are
\begin{equation*}\label{eq:erlang-cdf}
F_X(x)=1-e^{-\lambda k x}\sum_{r=0}^{k-1}\frac{(\lambda k x)^r}{r!},
\end{equation*}
\begin{equation*}\label{eq:erlang-pdf}
f_X(x)=\frac{(\lambda k)^k}{(k-1)!}x^{k-1}e^{-\lambda k x},
\end{equation*}
respectively, where $k$ is the number of phases\footnotemark.
\footnotetext{The term of $phase$ which is used in the context of probability theory is unrelated to a phase which is included in a program.}
The expected value of random variables which follow the above Erlang distribution is also $\frac{1}{\lambda}$.

The CDF and the PDF of hyperexponential distribution are
\[
F_X(x)=1-\sum_{j=1}^kC_je^{-\lambda_jx},
\]
\begin{equation*}\label{eq:hyper-pdf}
f_X(x)=\sum_{j=1}^kC_j\lambda_je^{-\lambda_jx},
\end{equation*}
respectively, where $\{C_j\}_{j=1}^k$ is an arbitrary discrete distribution. 
We chose parameters of hyperexponential distribution so that all of their expected value is equal to $\frac{1}{\lambda}$ as in the above two distributions. As a result, we obtained the PDF of hyperexponential distribution as follows:
\begin{equation}\label{eq:hyper-pdf-with-concrete-parameters}
f_X(x)=\frac{a}{2}\lambda e^{-a\lambda x}+\frac{a}{4a-2}\lambda e^{-\frac{a}{2a-1}\lambda x},
\end{equation}
where $a (\neq \frac{1}{2})$ is a real number.

The adoption of these distributions for the execution time is based on the following idea. For non-negative random variables with the same expected value, 
CV is the most useful and popular characteristic parameter for comparing the degree of variation. The CV $c(X)$ for non-negative random variable $X$ is defined by
\[
c(X) = \frac{\sqrt{V(X)}}{E(X)}
\]
where $V(X)$ is variance of $X$, i.e., $V(X) = E(X^2)-E(X)^2$. It is clear that for fixed value of $E(X)$, as increases the value of $c(X)$, the variance of $X$ also increases. In the field of probability theory, exponential distribution, Erlang distribution, and hyperexponential distribution are the most typical distributions with different CV. It is well known that $c(X) = 1$ for exponential distribution, $c(X) < 1$ for Erlang distribution, and $c(X) > 1$ for hyperexponential distribution. In other words, for the same expected value, Erlang distribution shows lower variance and hyperexponential distribution shows higher variance comparing with exponential distribution. 

In this paper, we additionally adopt uniform distribution. The CDF and the PDF of uniform distribution are
\[
F_X(x)=P(X_i\le x)=\frac{x-a}{b-a},
\]
\[
f_X(x)=\frac{1}{b-a},
\]
respectively, where $0 \le a < X_i \le b$.
The CV of uniform distribution is less than one, that is, 
\[
c(X_i)=\frac{b-a}{\sqrt{3}(b+a)}<1.
\]

\subsection{The exact solution with exponential distribution}

We show the exact solution of the expected execution time in case that the $n$ random variables follow exponential distribution.
Hereafter, we assume that $\lambda=1$ without loss of generality.

\subsubsection{the execution time for $n=1$}

In general, the expected value of a random variable $X$ is calculated as follows:

\begin{equation}\label{eq:expected-value}
E(X)=\int_0^\infty xf(x)dx,
\end{equation}
where $f(x)$ is PDF which the random variable follows.

For $n=1$, $Y_1=\min(X_1)=X_1$. Therefore, using Equation \ref{eq:exponential-pdf} and Equation \ref{eq:expected-value},

\[
E(Y_1)=\int_0^\infty x \lambda e^{-\lambda x}dx=\frac{1}{\lambda}=1.
\]

\subsubsection{the execution time for $n>1$}
\label{sss:n>1}
Hereafter, we define a speedup as $S_n=E(Y_1)/E(Y_n)$.

\begin{theorem}
(a) `random variables which represent the execution times of processor cores follow  exponential distribution independently' is a sufficient condition for (b) 'achieving a linear speedup', but is not a necessary condition.
\end{theorem}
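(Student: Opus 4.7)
The statement is really two claims in one: (i) sufficiency (exponentiality of the $X_i$ implies a linear speedup) and (ii) non-necessity (some non-exponential distribution also yields a linear speedup). I would handle them in turn.

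For sufficiency, the plan is just to substitute the exponential CDF into the formula given by Proposition 1 and observe that it collapses. Setting $F_X(y) = 1 - e^{-\lambda y}$ in $F_{Y_n}(y) = 1 - (1 - F_X(y))^n$ yields $F_{Y_n}(y) = 1 - e^{-n\lambda y}$, so $Y_n$ is itself exponential with rate $n\lambda$ and $E(Y_n) = 1/(n\lambda)$. Combined with $E(Y_1) = 1/\lambda$ from the previous subsection, this gives $S_n = E(Y_1)/E(Y_n) = n$, a linear speedup. This direction is essentially a one-line computation once Proposition 1 is in hand.

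For non-necessity, my plan is to exhibit a single non-exponential distribution whose speedup is linear. The cleanest candidate is the uniform distribution on $[0,b]$ (i.e.\ the case $a = 0$ of the uniform family already introduced in Section~\ref{s-sec:distribution}). Applying Proposition 1 gives $F_{Y_n}(y) = 1 - (1 - y/b)^n$, and the integral $E(Y_n) = \int_0^b (1 - y/b)^n\,dy = b/(n+1)$ is elementary. Hence $S_n = (n+1)/2$, which is linear in $n$. Since uniform on $[0,b]$ has bounded support (and CV $1/\sqrt{3} \ne 1$), it is manifestly not exponential, completing (ii).

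The only real subtlety lies in picking the counterexample for (ii), because many natural variations of the exponential fail: a shifted exponential $c + \mathrm{Exp}(\lambda)$ gives $E(Y_n) = c + 1/(n\lambda)$, so $S_n$ is bounded; a Weibull with shape $\alpha$ gives $S_n \sim n^{1/\alpha}$, which is linear only for $\alpha = 1$. What makes uniform on $[0,b]$ work is that its density is bounded and strictly positive at the left endpoint of the support, so the minimum of $n$ samples concentrates on the $1/n$ scale exactly as in the exponential case. Once this conceptual point is noted, the rest is the short calculation above.
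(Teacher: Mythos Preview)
Your sufficiency argument coincides with the paper's: both substitute the exponential CDF into the formula of Proposition~1, recognize $Y_n$ as exponential with rate $n\lambda$, and read off $S_n=n$.

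For non-necessity the two approaches diverge. The paper does not touch the distribution of the $X_i$ at all; instead it writes down the hyperexponential density of Equation~\ref{eq:hyper-pdf-with-concrete-parameters} with $\lambda$ replaced by $n\lambda$, observes that this density also has mean $1/(n\lambda)$, and declares it ``another example which provides (b)'' that differs from the exponential density of Equation~\ref{eq:fyn-exponential}. Your route is to fix a concrete non-exponential law for the $X_i$ themselves (uniform on $[0,b]$), push it through Proposition~1, and compute $S_n=(n+1)/2$. Your argument has the virtue of staying squarely inside the model --- you specify the input distribution $F_X$ and derive the speedup --- whereas the paper simply postulates a target density for $Y_n$ without checking that it arises as the minimum of $n$ i.i.d.\ samples drawn from any single $F_X$.

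One caution. The paper appears to identify ``linear speedup'' with $S_n=n$ exactly (it writes ``$S_n=n$, that is, a linear speedup''), and its hyperexponential example is engineered to hit $E(Y_n)=1/n$ on the nose. Your uniform example gives $S_n=(n+1)/2$, which is linear in $n$ but not equal to $n$. Under the looser reading (speedup growing proportionally to $n$) your counterexample is valid and arguably cleaner than the paper's; under the stricter reading $S_n=n$ it would not qualify. You should state explicitly which sense of ``linear'' you are invoking.
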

\begin{proof}
\mbox{}\begin{description}
\item[(a) is a sufficient condition for (b):] 
Using Equation \ref{eq:min_cdf} and Equation \ref{eq:exponential-cdf}, 
\[
F_{Y_n}(y)=1-e^{-\lambda ny}, 
\] 
\begin{equation}\label{eq:fyn-exponential}
f_{Y_n}(y)=\lambda n e^{-\lambda ny}.
\end{equation}
Therefore,
\[
E(Y_n)=\int_0^\infty yf_Y(y)dy=\frac{1}{\lambda n}=\frac{1}{n}.
\]
Consequently, $S_n=n$, that is, a linear speedup.

\item [(a) is not a necessary condition for (b):]
It is sufficient to show another distribution which provides (b).
If the random variable $Y_n$ follows the distribution which is represented as Equation \ref{eq:hyper-pdf-with-concrete-parameters} with $n\lambda$ instead of $\lambda$, namely,
\[
f_{Y_n}(x)=\frac{a}{2}n\lambda e^{-an\lambda x}+\frac{a}{4a-2}n\lambda e^{-\frac{a}{2a-1}n\lambda x},
\]
then $E(Y_n)=\frac{1}{\lambda n}=\frac{1}{n}$. This is another example which provides (b) and is different to Equation \ref{eq:fyn-exponential}
which is obtained from exponential distribution. Therefore, it is proved that (a) is not a necessary condition for (b).

\end{description}
\end{proof}

\subsection{The results of numerical experiments}
To evaluate the relation between a CV and a speedup, we calculated the execution times with varying the distributions which the random variables follow. We carried out Monte Carlo simulations as shown in Algorithm 1.

%\vspace{5mm}
\newpage
\noindent
\underline{\textbf{Algorithm 1:}}
\begin{description}
\item[Input: ] the number of steps $N$, the distribution $D$, the number of processor cores $n$.
\item[Output: ] the execution time.
\end{description}
\begin{enumerate}
\item $i \leftarrow 0$. 
\item $S \leftarrow 0$.
\item Substitute the random numbers which follows the distribution $D$ into the $n$ random variables $X_1, X_2, \dots, X_n$~\cite{morimatsui2004e}.
\item $Y_n\leftarrow\min(X_1, X_2, \dots, X_n)$.
\item $S \leftarrow S + Y_n$.
\item $i \leftarrow i + 1.$
\item if $i < N$ go to Step 3, otherwise go to Step 8.
\item Output $\frac{S}{N}$.
\end{enumerate}
\vspace{2mm}

We varied the number of cores $n = \{1, 2, \dots, 100\}$ and defined $N$ as 100,000. 
We denote exponential distribution, Erlang distribution with parameter $k$, and hyperexponential distribution with parameter $a$ by M, $\mathrm{E}_{k}$, $\mathrm{H}_{2(a)}$, respectively, derived from Kendall's notation in queuing theory~\cite{gross2008fundamentals}.

\subsubsection{Comparing Speedups among various CVs}
\label{sss:varying_CV}
The speedups with varying CV are shown in Figure \ref{fig:speedup-100-proc-various-cvs}. CVs are shown in Table \ref{tab:cv}.
\begin{table}[bt]
\caption{Coefficients of variation (CV)}
\hbox to\hsize{\hfil
\begin{tabular}{c|c|c|c|c}\hline\hline
Distribution $D$ & $\mathrm{E}_{10}$ & M & $\mathrm{H}_{2(5)}$ & $\mathrm{H}_{2(10)}$\\\hline
Coefficients of variation & 0.32 & 1.00 & 1.51 & 1.62\\
\end{tabular}\hfil}
\label{tab:cv}
\end{table}

\begin{myeps}{bt}{140mm}{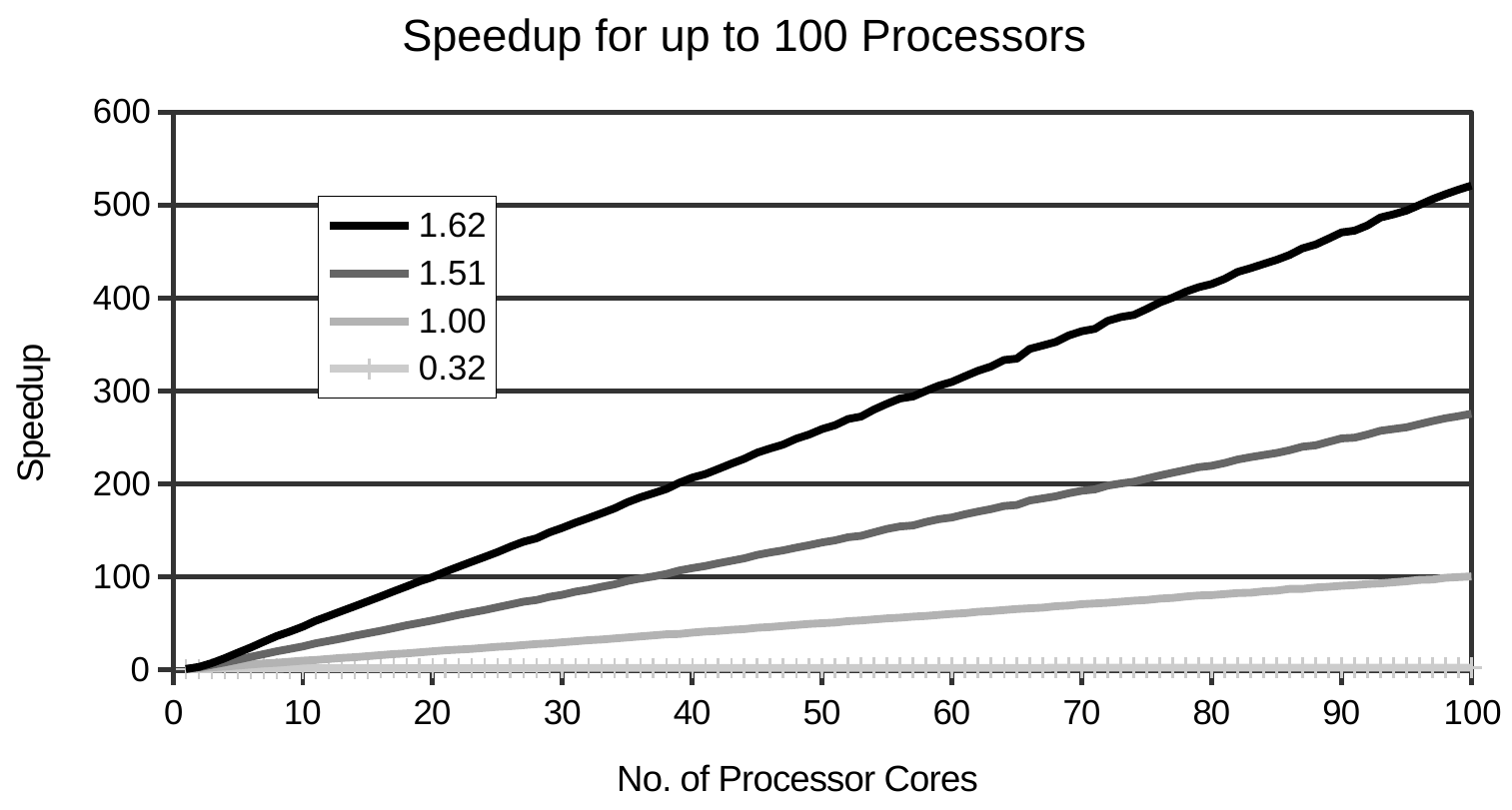}
\caption{Speedup for various CVs}
\label{fig:speedup-100-proc-various-cvs}
\end{myeps}

As a whole, more cores and a larger CV bring a larger speedup. From these results, we theoretically confirmed the fact which is intuitively predicted and is confirmed experimentally by other studies. With the distribution M, a linear speedup is achieved, which is identical to the exact solution mentioned in Section \ref{sss:n>1}. With the distribution $\mathrm{H}_{2(5)}$ (CV is 1.51), a speedup is 275.71 for $n = 100$. With the distribution $\mathrm{H}_{2(10)}$ (CV is 1.62), a speedup is 521.15 for $n = 100$. These results show sufficient conditions for achieving superlinear speedups while these hyperexponential distributions might not reflect the behavior of a real-world application.

\begin{myeps}{bt}{140mm}{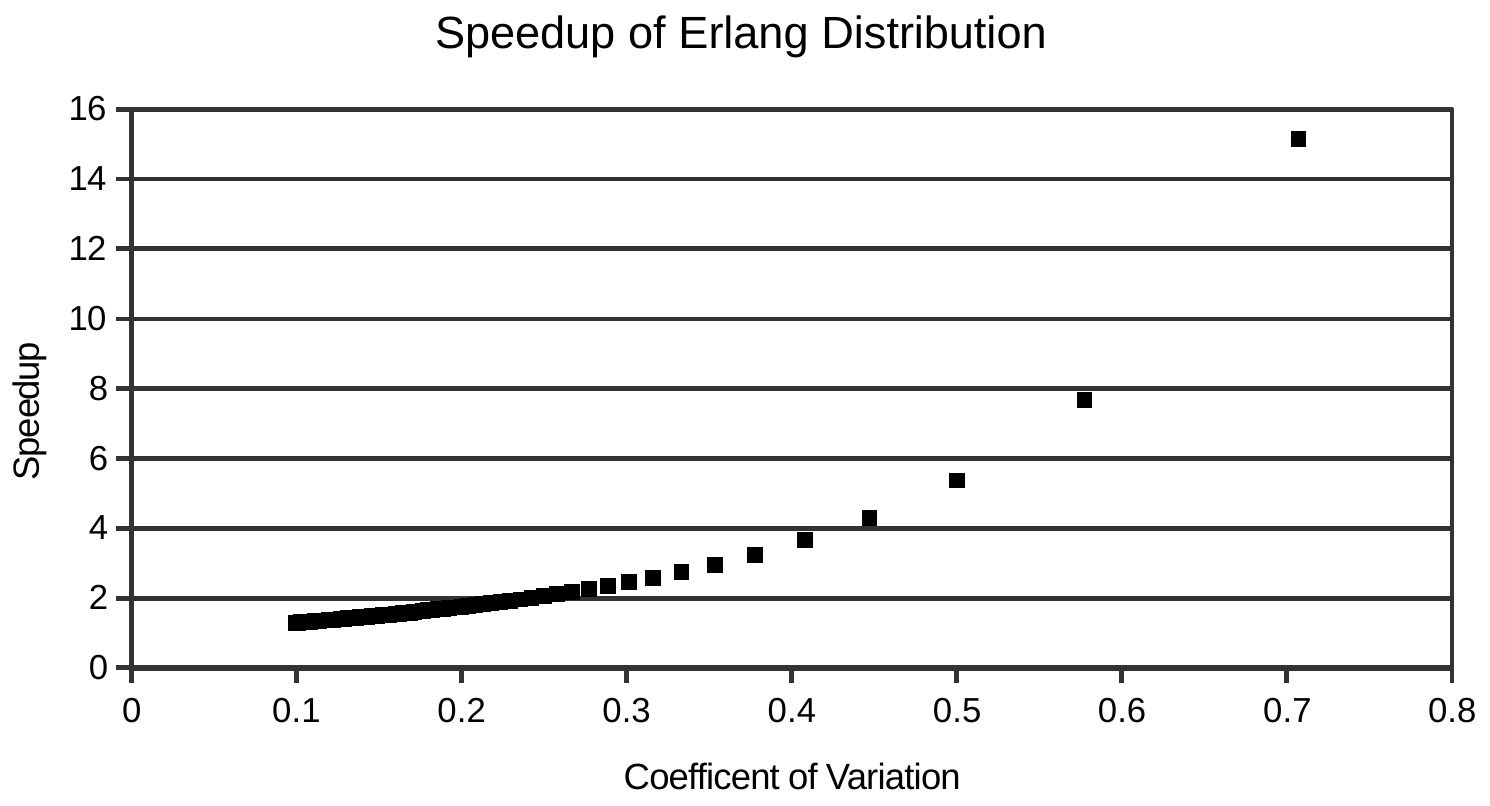}
\caption{Speedup for Erlang Distribution ($n=100$)}
\label{fig:speedup-erlang}
\end{myeps}

\subsubsection{Speedups for one hundred processors with extreme CVs}
In Section \ref{sss:varying_CV}, we found that the different CV provides the different speedup. To explore the relation between a CV and a speedup in more detail, we carried out simulations with varying CV more finely for $n=100$. 

We show the speedups with varying the parameter $k$ of Erlang distribution 2 to 100 in Figure \ref{fig:speedup-erlang}. Speedups are 7.68 and 15.14 for 0.58 and 0.71 as CV, respectively. These speedups are possibly acceptable as performance gains for $n=100$. Meanwhile, the speedups are lower than 2 with lower CVs. These speedups are unacceptable unless computing resources and electricity are abundantly available.

We show the speedups with varying the parameter $a$ of hyperexponential distribution 1 to 100 in Figure \ref{fig:speedup-hyper}. Note that hyperexponential distribution is equivalent to exponential distribution if $a=1$. While the speedup is 426.08 for 1.59 as CV, the speedup grows rapidly, that is, as 1,798.56 and 4,975.12 for 1.70 and 1.72 as CV, respectively. If someone finds an application which shows such behavior, a huge performance gain is obtained. Although it might not be realistic to find such an application, it is meaningful to obtain a theoretical perspective for a performance gain.

\begin{myeps}{bt}{140mm}{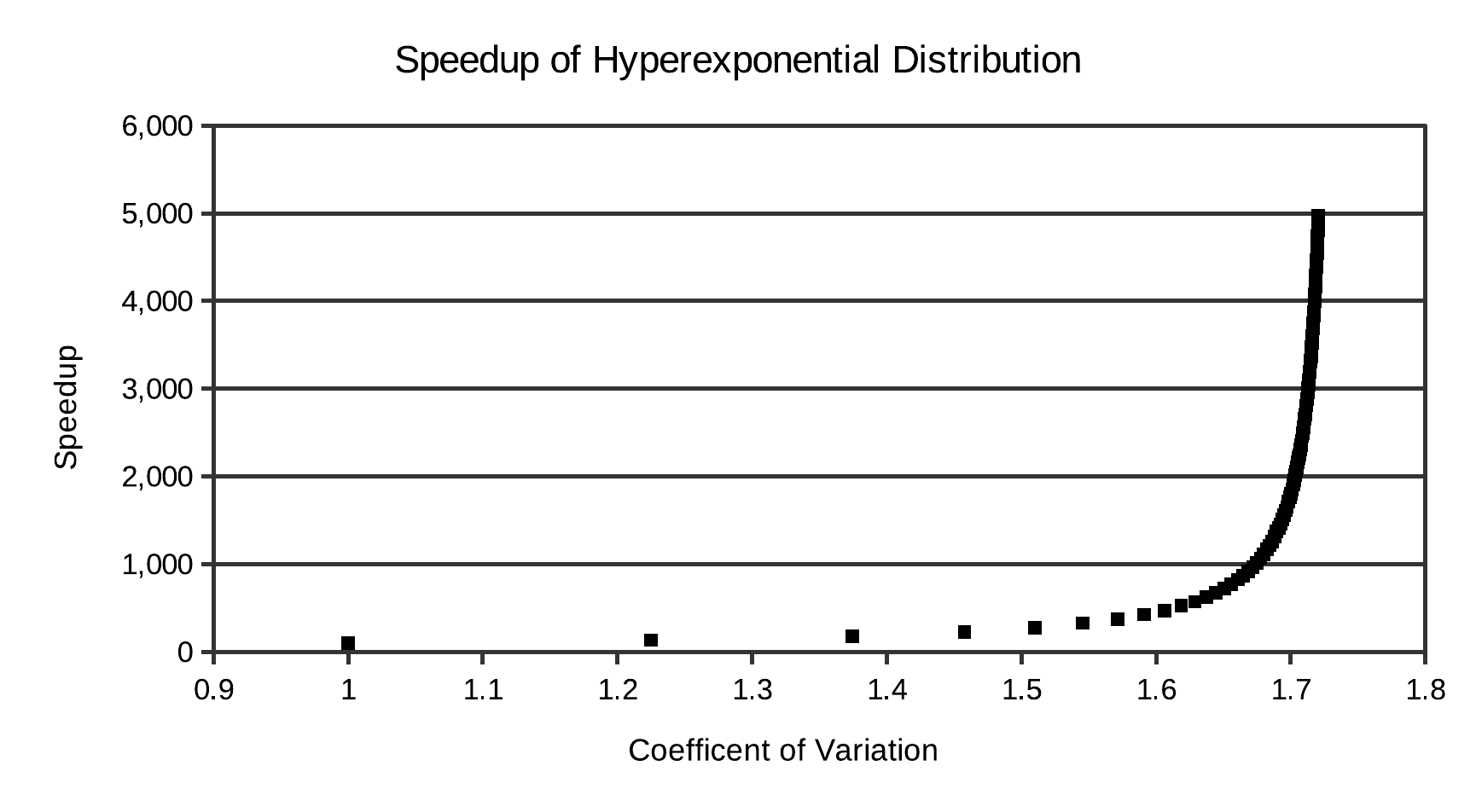}
\caption{Speedup for Hyperexponetial Distribution ($n=100$)}
\label{fig:speedup-hyper}
\end{myeps}

\subsubsection{Comparing Speedups when fixing CV}

Finally, we compare the speedups for Erlang distribution with the ones for uniform distribution, to explore the speedups with the identical CV provided by different distributions. We adopted the parameter $k$ as $3$ for Erlang distribution and chose the parameters as $a=0, b=2$ for uniform distribution so that the expected value is 1, which is the same as other distributions. As a result, CV is $\frac{1}{\sqrt{3}}\approx 0.58$ for both Erlang distribution and uniform distribution.

\begin{myeps}{bt}{140mm}{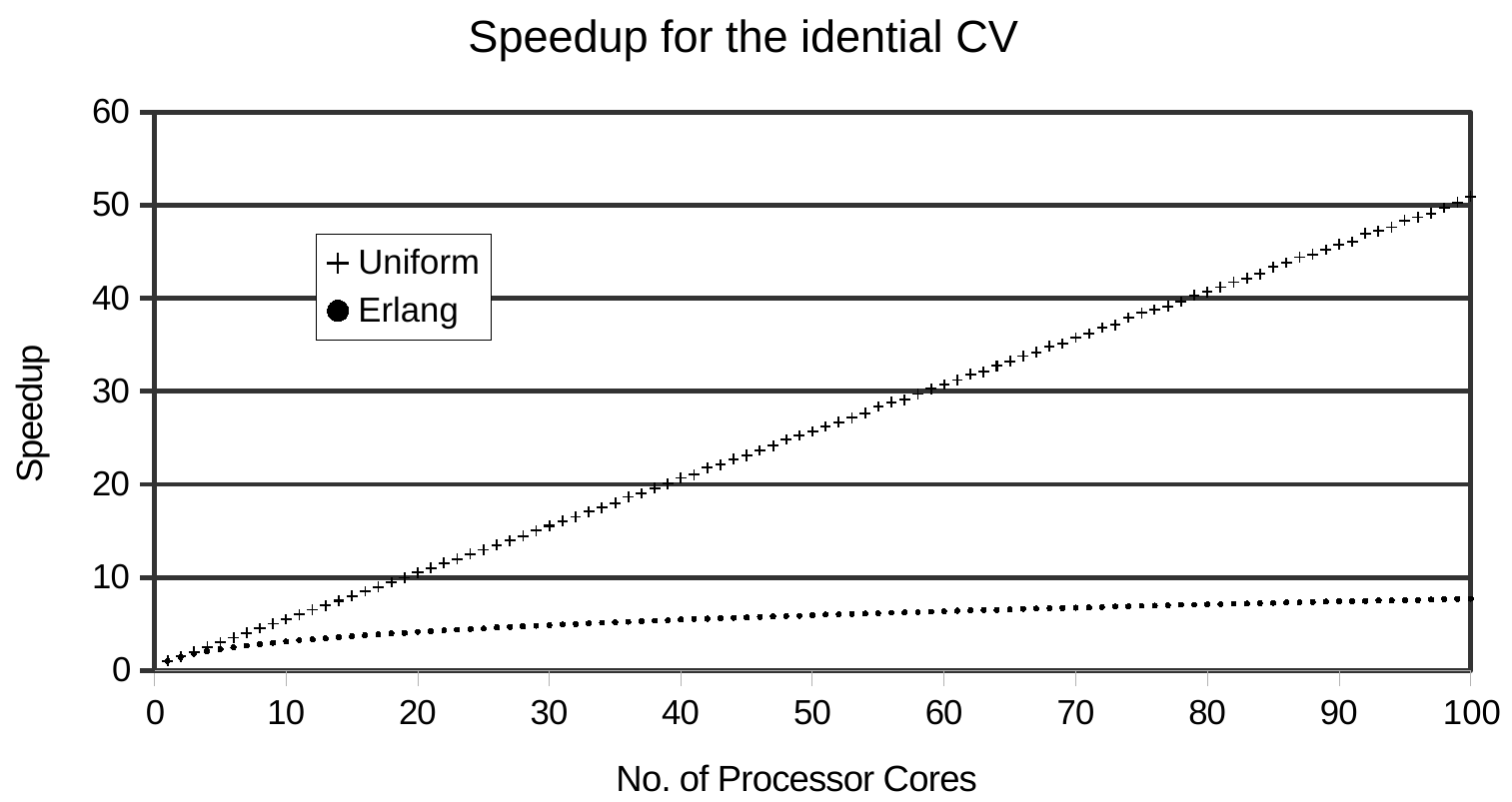}
\caption{Comparing speedups for the identical CV}
\label{fig:speedup-100-proc-idential-cv}
\end{myeps}

We show the speedups with varying the number of cores 1 to 100 in Figure \ref{fig:speedup-100-proc-idential-cv}. The speedups for uniform distribution are better than the ones for Erlang distribution. The reason for these results is as followings: the shape of graph corresponding to PDF of Erlang distribution is mountain type while the shape corresponding to PDF of uniform distribution is horizontally flat. In other words, the random number which follows Erlang distribution tends to be around the peak of the graph while the probability of generating a random number for uniform distribution is equal to any other numbers. Therefore, the probability for generating a smaller number for uniform distribution is higher than the one for Erlang distribution. In our model, because the smallest random variable which shows the shortest execution time among cores is adopted as the overall execution time, the speedups for uniform distribution are better than the ones for Erlang distribution.

These results show that the identical CV does not always yield the identical speedup. While CV is a convenient measure to predict a speedup, it is necessary to consider the distribution as well as CV for a more precise prediction.

\section{Conclusion}
\label{s:conclusion}
In this paper, we constructed a mathematical model which represents the behavior of \cpc and theoretically analyzed \cpc using the model. We investigate sufficient conditions which provide a linear speedup through a theoretical analysis as well as simulations with various kinds of probability distribution. As a consequence, we found that exponential distribution yields a linear speedup and is not the only distribution which yields a linear speedup. This imply that it is possible to find the different distribution which yields a linear distribution and is easier to be realized as a real-world entity than exponential distribution.

Although CV is consider as a convenient measure to predict a speedup so far, we found that the identical CV does not always yield the identical speedup through the experiments with the fixed CV.

Our future work will include:
\begin{itemize}
\item to find a \textit{better} distribution which yields a linear or superlinear speedup than exponential distribution. In other words, such a distribution should be easier to realized as a real-world entity than exponential distribution.
\item to evaluate our proposed model using applications. More specifically, we compare the predicted speedup which is obtained through a probabilistic analysis of an application with the corresponding measured speedup experimentally.
\end{itemize}

\end{document}